\def\ve#1{\mathchoice{\mbox{\boldmath$\displaystyle\bf#1$}}
{\mbox{\boldmath$\textstyle\bf#1$}}
{\mbox{\boldmath$\scriptstyle\bf#1$}}
{\mbox{\boldmath$\scriptscriptstyle\bf#1$}}}
\newcommand\Z{\mathbb Z}  
\newcommand\R{\mathbb R}
\DeclareMathOperator{\IDval}{IDval}    
\DeclareMathOperator{\IDvalInit}{IDvalInit}
\let\epsilon=\varepsilon
\newcommand{\DeclareBracket}[3]{
  \newcommand{#1}[2][]{%
  \ifthenelse%
  {\equal{##1}{}}%
  {\left#2##2\right#3}%
  {\csname ##1l\endcsname#2##2\csname ##1r\endcsname#3}}}    
\definecolor{purple}{RGB}{160,32,240}       
\definecolor{darkyellow}{RGB}{190,190,0}       
\definecolor{darkgreen}{RGB}{0,120,0}       
\definecolor{darkblue}{RGB}{0,0,180}       
\newcommand\ifpdf
 \newtheorem{lemma}{Lemma}
 \renewcommand*{\c@lemma}{\c@theorem}
 \renewcommand*{\p@lemma}{\p@theorem}
 \renewcommand*{\c@conjecture}{\c@theorem}
 \renewcommand*{\p@conjecture}{\p@theorem}
 \newtheorem{proposition}{Proposition}
 \renewcommand*{\c@proposition}{\c@theorem}
 \renewcommand*{\p@proposition}{\p@theorem}
 \renewcommand*{\c@corollary}{\c@theorem}
 \renewcommand*{\p@corollary}{\p@theorem}
 \renewcommand*{\c@observation}{\c@theorem}
 \renewcommand*{\p@observation}{\p@theorem}
 \theoremstyle{definition}
 \newtheorem{problem}{Problem}
 \renewcommand*{\c@problem}{\c@theorem}
 \renewcommand*{\p@problem}{\p@theorem}
 \newtheorem{definition}{Definition}
 \renewcommand*{\c@definition}{\c@theorem}
 \renewcommand*{\p@definition}{\p@theorem}
 \renewcommand*{\c@remark}{\c@theorem}
 \renewcommand*{\p@remark}{\p@theorem}
 \newtheorem{example}{Example}
 \renewcommand*{\c@example}{\c@theorem}
 \renewcommand*{\p@example}{\p@theorem}
\renewcommand*{\c@algorithm}{\c@theorem}
\renewcommand*{\p@algorithm}{\p@theorem}
\title[QuickLexSort]{QuickLexSort: An efficient algorithm for lexicographically sorting nested restrictions of a database}
\author[]{David Haws\\IBM T.J. Watson Research Center, 1101 Kitchawan Road, Yorktown Heights, NY 10598, USA}
\email{dhaws@us.ibm.com, dchaws@gmail.com}
\date{}
\begin{document}


\begin{abstract}
Lexicographical sorting is a fundamental problem with applications to
contingency tables, databases, Bayesian networks, and more. A standard method
to lexicographically sort general data is to iteratively use a stable sort -- a
sort which preserves existing orders. Here we present a new method of
lexicographical sorting called QuickLexSort.  Whereas a stable sort based
lexicographical sorting algorithm operates from the least important to most
important features, in contrast, QuickLexSort sorts from the most important to
least important features, refining the sort as it goes.  QuickLexSort first
requires a one-time modest pre-processing step where each feature of the data set is
sorted independently.  When lexicographically sorting a database, QuickLexSort
(including pre-processing) has comparable running time to using a stable sort
based approach.  For a data base with $m$ rows and $n$ columns, and a sorting
algorithm running in time $O(mlog(m))$, a stable sort based lexicographical
sort and QuickLexSort will both take time $O(nmlog(m))$.  However in many
applications one has the need to lexicographically sort nested data, e.g.\ all
possible sub-matrices up to a certain cardinality of columns. In such cases we
show QuickLexSort gives a performance improvement of a log factor
of the database length (rows in matrix) over using a standard stable sort based
approach. E.g.\ to sort all sub-matrices up to cardinality $k$, QuickLexSort
has running time $O(mn^k)$ whereas a stable sort based lexicographical sort
will take time $O(mlog(m)n^k)$.  After the pre-processing step that is run only
once for the entire matrix, QuickLexSort has a running time linear in the
number of nested sub-matrices to sort. We conclude with an application to
Bayesian network scoring to detect epistasis using SNP marker data.

\end{abstract}

\maketitle
\footnotetext{\today}
\footnotetext{dhaws@us.ibm.com, dchaws@gmail.com}

\section{Introduction}
\label{sec:introduction}
Lexicographical ordering is a method to sort a list of elements where each
element has multiple features, such as a vector, provided one has an order for
each feature. Lexicographic ordering is also known as dictionary or
alphabetical ordering. Put simply, the lexicographical ordering places the
elements in a sequence such that: elements are ordered according to the first
feature, any ties are broken by the second feature, any ties are broken by the
third feature, etc. Lexicographical sorting is a fundamental problem with
applications to contingency tables, Bayesian networks,
databases\cite{poess2003data,lemire2010sorting}, and more. A contingency table
lists the frequency of each element present in the data. For example, given a
matrix, one can form a contingency table which for each unique row, counts the
number of equal rows in the data. A naive approach would loop through each row,
then again loop through the rows and count the number of equal rows. A better
approach would be to sort all the rows of the matrix lexicographically, then
loop through the matrix one last time forming the counts for the contingency
table. Contingency tables formed from a matrix are used in the learning of
Bayesian networks, as well as other applications.

Traditionally, one sorts the rows of a matrix lexicographically by iteratively
applying a stable sort -- a sorting algorithm which preserves the original
order of elements that are equal. The rows of the matrix $D$ are stable sorted by
the least important feature, the next to least important feature, etc. If $D
\in \R^{m \times n}$ and the stable sorting algorithm runs in time $T(m)$, then
the time to sort the matrix is $O(n T(m))$. If the stable sort is a comparison
base sort then $T(m)$ is bounded below by
$\Omega(m\log(m))$\cite{Cormen:2001fk}. In many applications, such as learning
Bayesian networks, one not only wants to sort the rows of a data matrix $D$,
but also sort the rows of $D$ restricted to a sequence of columns.

We present \emph{QuickLexSort} which can efficiently sort a database (rows of a
matrix) restricted to any sequence of features (columns of a matrix). Moreover,
QuickLexSort is designed to quickly sort a nested set of restrictions by
features. For example, one may wish to lexicographically sort all possible
sub-matrices given by all non-empty sets of columns up to a specific
cardinality. QuickLexSort first requires a modest pre-processing step where
each feature of the data set is sorted independently.  When lexicographically
sorting the rows of a single matrix, QuickLexSort (including pre-processing)
has comparable running time to using a stable sort based approach.  However,
when sorting a set of nested sets of features we show QuickLexSort gives a
performance improvement of a multiple of a log factor of the database length
(rows in matrix) over using a standard stable sort based approach. That is,
after the pre-processing step, QuickLexSort has a running time linear in the
number of nested sub-matrices to sort. The pre-processing step need only be
computed once for the entire matrix and not for each sub-matrix. 

The article is organized as follows. In \autoref{sec:background} we
give background details on lexicographical sorting and the stable sorting
approach. In \autoref{sec:qls} we present QuickLexSort and prove the validity and space and
running time. In \autoref{sec:sortingsubmat} we show how
QuickLexSort can be used to efficiently sort nested restrictions of databases.
In \autoref{sec:experiments} we present a small experiment verifying
computationally the advantage of QuickLexSort over a stable sort based approach
to lexicographically sorting a nested set of data base restrictions. In
\autoref{sec:adtrees} we briefly compare QuickLexSort with AD-trees. In
\autoref{sec:bn} we give details on applications of QuickLexSort to scoring and
learning Bayesian networks.  Moreover, we perform experiments showing the
validity to using QuickLexSort and a Bayesian scoring approach to detecting
epistasis in biological SNP data.  Finally, in the Appendix in \autoref{sec:appendix} we
present an augmented version of QuickLexSort which provides a more natural
encoding of the lexicographical ordering.


\section{Background}
\label{sec:background}
Sorting is the method of rearranging a sequence of items such that they are
placed with respect to some order. Here we consider total or linear orderings.
A set follows a \emph{total ordering}, given by the symbol '$\leq$', if the following 
three conditions hold:
\begin{enumerate}
    \item If $a \leq b$ and $b \leq a$ then $ a = b$, (antisymmetry)
    \item If $a \leq b$ and $b \leq c$ then $a \leq c$, (transitivity)
    \item $a \leq b$ or $b \leq a$ (totality).
\end{enumerate}
Here we primarily focus on comparison sorting, a class of sorting
algorithms which only uses a binary comparison operation. That is, the only
information used to sort is the given comparison operation. It is known that the
optimal running time of comparison sort is bounded below by $\Omega(n log(n))$
\cite{Cormen:2001fk}. Some non-comparison sorting algorithms, such as bucket
sort, may run in linear or near linear time, depending on the data.
The exposition here will focus on real valued data and the standard ordering.
However, the results extend naturally to any data with some type of comparison
operation.

Let $D \in \R^{m \times n}$ be a real matrix where we index rows and columns by
$\{0,\ldots,m-1\}$ and $\{0,\ldots,n-1\}$ respectively. 

\begin{definition}[\bf Lexicographic Order]
We say $\ve v \in \R^n$ is less
than $\ve w \in \R^n$ \emph{lexicographically}, denoted $\ve v <_{\text{lex}}
\ve w$, if 
\begin{enumerate}
    \item $v_0 < w_0$ or 
    \item $v_0 = w_0,\ldots, v_k = w_k$ and $v_{k+1} < w_{k+1}$ for some $ 0 \leq k < n-1$. 
\end{enumerate}
\end{definition}
For example $[2,1,3,0] <_{\text{lex}} [2,1,5,-1]$.  If $n=1$, lexicographical
order is equivalent to the normal ordering.  The task here is to sort
the rows of sub-matrices of $D$ (obtained by taking subsets of the columns and
all rows of $D$) using lexicographical order. Here we point out the distinction
of a \emph{set} and a \emph{sequence}, both are groups of objects where in the
former the ordering of the object is irrelevant and in the latter the ordering
matters. That is $(1,3,2)$ and $(3,2,1)$ are distinguishable as sequences but
not as sets.  Typically we will use ``\{\}'' to denote sets and ``()'' to
denote sequences.

By lexicographical sorting of $E \in \R^{m \times n}$ we mean the
lexicographical sorting of the rows (vectors) of $E$ assuming the original
ordering of columns of $E$. By lexicographical sorting of $E$ restricted to the
set of columns $C$ we mean the lexicographical sorting of the sub-matrix of $E$
given by the set of columns $C$ where the original order of the columns is preserved.
By lexicographical sorting of $E$ restricted to the sequence of columns $S$ we
mean the lexicographical sorting of the sub-matrix of $E$ given by the sequence
of columns $S$ where the order of the columns is take from $S$.

\begin{example}
Consider a matrix $E \in \Z^{5 \times 3}$, the lexicographical sorting ($E'$) of $E$,
the lexicographical sorting ($E''$) of the sub-matrix of $E$ given by the set
of columns $\{1,2\}$, and the lexicographical sorting ($E'''$) of the
sub-matrix of $E$ given by the sequence of columns $(2,1)$. In the following,
we write the row and column indices of the original matrix $E$ on the 
left and top of the matrix respectively.
$$E = 
\bordermatrix{
  & 0 & 1 & 2 \cr
0 & 0 & 1 & 0 \cr
1 & 1 & 1 & 0 \cr
2 & 1 & 0 & 0 \cr
3 & 0 & 1 & 1 \cr
4 & 0 & 0 & 1}, \quad
E' =
\bordermatrix{
  & 0 & 1 & 2 \cr
4 & 0 & 0 & 1 \cr
0 & 0 & 1 & 0 \cr
3 & 0 & 1 & 1 \cr
2 & 1 & 0 & 0 \cr
1 & 1 & 1 & 0}, \quad
E'' =
\bordermatrix{
  & 1 & 2 \cr
2 & 0 & 0 \cr
4 & 0 & 1 \cr
0 & 1 & 0 \cr
1 & 1 & 0 \cr
3 & 1 & 1 }, \quad
E''' = 
\bordermatrix{
  & 2 & 1 \cr
2 & 0 & 0 \cr
0 & 0 & 1 \cr
1 & 0 & 1 \cr
4 & 1 & 0 \cr 
3 & 1 & 1}
    $$
\end{example}

We note that all algorithms presented here do not modify the original
data matrices, and simply represent the ordering via certain vectors
which we define below. Note though, that in the examples we do reorder
the rows to better illustrate certain concepts, although we are careful
to preserve the original matrix row indices on the left.

\begin{definition}[\bf Ranking Vector]
Let $D' \in \R^{m \times l}$ be a sub-matrix of $D \in \R^{m \times n}$. 
The unique \emph{ranking vector} $L \in \R^m$ of some 
ordering $\widehat O$ of rows of $D'$ is a vector such that
\begin{enumerate}
    \item $L \in \{0,\ldots,m-1\}^m$,
    \item if row $i$ of $D'$ is equal to row $j$ in the ordering $\widehat O$, then $L_i = L_j$,
    \item if row $i$ of $D'$ is less than row $j$ in the ordering $\widehat O$, then $L_i < L_j$,
    \item $\sum_{i=0}^{m-1} L_i$ is minimal.
\end{enumerate}
\end{definition}
The last item guarantees the ranking vector $L$ is unique. We say ranking
vector $L'$ is a \emph{refinement} of ranking vector $L$ if $L_i \leq L_j$
implies $L'_i \leq L_j$ for all $i,j$.
Intuitively, the $m$ rows of $D$ are sorted lexicographically and thus form
$l$ blocks, where $l \leq m$, every row in a block is lexicographically equal,
and the $l$ blocks are in increasing lexicographic order. In this sense, $L_i$
is the block index in which row $i$ resides.

In what follows we consider a ranking vector sufficient information to describe
an ordering. However, it may be desirable to have an alternative data structure
to describe the ordering, such as an ordered list of row indices giving the
smallest to largest row vectors. We describe such a case and the appropriate 
modifications to our algorithms in the Appendix in \autoref{sec:appendix}.
Our modified algorithm has the same time and space complexity.



\begin{example}
$$D = 
\bordermatrix{
  & 0 & 1 & 2 & 3 & 4 \cr
0 & 1 & 1 & 1 & 2 & 1 \cr
1 & 1 & 1 & 1 & 2 & 0 \cr
2 & 0 & 0 & 0 & 3 & 0 \cr
3 & 1 & 1 & 1 & 2 & 0 \cr
4 & 1 & 0 & 1 & 1 & 1 \cr
5 & 1 & 1 & 1 & 1 & 1 \cr
6 & 1 & 1 & 1 & 3 & 0 \cr
7 & 1 & 1 & 0 & 2 & 1 \cr
8 & 1 & 0 & 1 & 1 & 0 \cr
9 & 1 & 1 & 1 & 1 & 1 }, \;
D' = 
\bordermatrix{
  & 0 & 1 & 2 & 3 & 4 \cr
2 & 0 & 0 & 0 & 3 & 0 \cr
8 & 1 & 0 & 1 & 1 & 0 \cr
4 & 1 & 0 & 1 & 1 & 1 \cr
7 & 1 & 1 & 0 & 2 & 1 \cr
5 & 1 & 1 & 1 & 1 & 1 \cr
9 & 1 & 1 & 1 & 1 & 1 \cr
1 & 1 & 1 & 1 & 2 & 0 \cr
3 & 1 & 1 & 1 & 2 & 0 \cr
0 & 1 & 1 & 1 & 2 & 1 \cr
6 & 1 & 1 & 1 & 3 & 0 }, \;
D'' = 
\bordermatrix{
  & 0 & 3 & 4 \cr
2 & 0 & 3 & 0 \cr
8 & 1 & 1 & 0 \cr
4 & 1 & 1 & 1 \cr
5 & 1 & 1 & 1 \cr
9 & 1 & 1 & 1 \cr
1 & 1 & 2 & 0 \cr
3 & 1 & 2 & 0 \cr
0 & 1 & 2 & 1 \cr
7 & 1 & 2 & 1 \cr
6 & 1 & 3 & 0 }, \;
D''' =
\bordermatrix{
  & 0 & 3 \cr
2 & 0 & 3 \cr
4 & 1 & 1 \cr
5 & 1 & 1 \cr
8 & 1 & 1 \cr
9 & 1 & 1 \cr
0 & 1 & 2 \cr
1 & 1 & 2 \cr
3 & 1 & 2 \cr
7 & 1 & 2 \cr
6 & 1 & 3 }.
$$
A matrix $D$ with row indices $\{0,\ldots,9\}$ and column indices
$\{0,\ldots,4\}$. The matrix $D'$ gives the rows of $D$ sorted
lexicographically, $D''$ gives the rows of $D$ restricted to columns
$\{0,3,4\}$ sorted lexicographically, and $D'''$ gives the rows of $D$
restricted to columns $\{0,3\}$ sorted lexicographically.  The ranking vectors
of the lexicographic orderings shown in $D'$, $D''$, and $D'''$ are $$ L' :=
[6,5,0,5,2,4,7,3,1,4]^\top, \; L'' := [4,3,0,3,2,2,5,4,1,2]^\top, \; L''' :=
[2,2,0,2,1,1,3,2,1,1]^\top.$$ Note the ranking vectors refer to the original
row indices of the matrix $D$.  Note that $L''$ is a refinement of $L'''$.
\label{ex:D}
\end{example}
    
\subsection{Stable Sort}
\begin{definition}[\bf Stable Sort]
A sorting algorithm is \emph{stable} if it maintains the relative order of
items with equal value. That is, if $a$ comes before $b$ in the original input
and $a = b$, then a stable sorting algorithm orders $a$ before $b$.
\end{definition}

\begin{example}
Suppose we performed a stable sorting of the rows of $D$ where we use only the values in
column $4$ to perform the sort.  We preserve the order of all rows which have
the same value in column $4$, and get $D''''$ below.
$$D = 
\bordermatrix{
  & 0 & 1 & 2 & 3 & 4 \cr
0 & 1 & 1 & 1 & 2 & 1 \cr
1 & 1 & 1 & 1 & 2 & 0 \cr
2 & 0 & 0 & 0 & 3 & 0 \cr
3 & 1 & 1 & 1 & 2 & 0 \cr
4 & 1 & 0 & 1 & 1 & 1 \cr
5 & 1 & 1 & 1 & 1 & 1 \cr
6 & 1 & 1 & 1 & 3 & 0 \cr
7 & 1 & 1 & 0 & 2 & 1 \cr
8 & 1 & 0 & 1 & 1 & 0 \cr
9 & 1 & 1 & 1 & 1 & 1 }, \;
D'''' = 
\bordermatrix{
  & 0 & 1 & 2 & 3 & 4 \cr
1 & 1 & 1 & 1 & 2 & 0 \cr
2 & 0 & 0 & 0 & 3 & 0 \cr
3 & 1 & 1 & 1 & 2 & 0 \cr
6 & 1 & 1 & 1 & 3 & 0 \cr
8 & 1 & 0 & 1 & 1 & 0 \cr
0 & 1 & 1 & 1 & 2 & 1 \cr
4 & 1 & 0 & 1 & 1 & 1 \cr
5 & 1 & 1 & 1 & 1 & 1 \cr
7 & 1 & 1 & 0 & 2 & 1 \cr
9 & 1 & 1 & 1 & 1 & 1 }. $$
Although rows $1$ and $2$ have repeated values in column $4$, a stable sorting
algorithm places row $1$ before row $2$, preserving the original
ordering.
\end{example}

A stable sorting algorithm can be used iteratively to perform
lexicographical sorting. When a stable sorting algorithm is used to do
lexicographical sorting we will refer to it as \emph{StableLexSort}.
See \autoref{alg:stablelex} below.

\begin{algorithm}
\begin{algorithmic}[1]
    \REQUIRE $D \in \R^{m\times n}$, $(a_1,\ldots,a_p)$ where $ a_i \in \{0,\ldots,n-1\}$ $\forall i$, $S$ a stable sorting algorithm.
    \ENSURE $D'$ a lexicographic sorting of rows of matrix $D$.
    \STATE Let $D' := D$.
    \FOR{$j= p,\ldots,1$}
        \STATE Sort rows of $D'$ using the stable sorting algorithm $S$ and values in column $a_j$.
    \ENDFOR
    \RETURN $D'$
\end{algorithmic}
\caption{StableLexSort: Lexicographic sort using stable sort.}
\label{alg:stablelex}
\end{algorithm}

\begin{example}
We give an example of \autoref{alg:stablelex} with input $E$ and $(0,1,2,3,4)$.
First stable sort the rows $E$ by the values in column $4$. Further stable sort
the rows by values in column $3$.  Repeat stable sort of the rows by values in
column $2$, then $1$, and finally
$0$.
\begin{align*}
    E := &
\bordermatrix{
  & 0 & 1 & 2 & 3 & 4 \cr
0 & 1 & 1 & 1 & 2 & 1 \cr
1 & 1 & 1 & 1 & 2 & 0 \cr
2 & 0 & 0 & 0 & 3 & 0 \cr
3 & 1 & 1 & 1 & 2 & 0 \cr
4 & 1 & 0 & 1 & 1 & 1 \cr
5 & 1 & 1 & 1 & 1 & 1 \cr
6 & 1 & 1 & 1 & 3 & 0 \cr
7 & 1 & 1 & 0 & 2 & 1 \cr
8 & 1 & 0 & 1 & 1 & 0 \cr
9 & 1 & 1 & 1 & 1 & 1 }, \quad
\bordermatrix{
  & 0 & 1 & 2 & 3 & {\color{red} 4} \cr
1 & 1 & 1 & 1 & 2 & {\color{red} 0} \cr
2 & 0 & 0 & 0 & 3 & {\color{red} 0} \cr
3 & 1 & 1 & 1 & 2 & {\color{red} 0} \cr
6 & 1 & 1 & 1 & 3 & {\color{red} 0} \cr  
8 & 1 & 0 & 1 & 1 & {\color{red} 0} \cr
0 & 1 & 1 & 1 & 2 & {\color{red} 1} \cr
4 & 1 & 0 & 1 & 1 & {\color{red} 1} \cr
5 & 1 & 1 & 1 & 1 & {\color{red} 1} \cr
7 & 1 & 1 & 0 & 2 & {\color{red} 1} \cr
9 & 1 & 1 & 1 & 1 & {\color{red} 1}}, \quad
\bordermatrix{
  & 0 & 1 & 2 & {\color{red} 3} & 4 \cr
8 & 1 & 0 & 1 & {\color{red} 1}  & 0 \cr
4 & 1 & 0 & 1 & {\color{red} 1}  & 1 \cr
5 & 1 & 1 & 1 & {\color{red} 1}  & 1 \cr
9 & 1 & 1 & 1 & {\color{red} 1}  & 1 \cr
1 & 1 & 1 & 1 & {\color{red} 2}  & 0 \cr
3 & 1 & 1 & 1 & {\color{red} 2}  & 0 \cr
0 & 1 & 1 & 1 & {\color{red} 2}  & 1 \cr
7 & 1 & 1 & 0 & {\color{red} 2}  & 1 \cr
2 & 0 & 0 & 0 & {\color{red} 3}  & 0 \cr
6 & 1 & 1 & 1 & {\color{red} 3}  & 0}, \\
  & \phantom{A}\\
  &
\bordermatrix{
  & 0 & 1 & {\color{red} 2} & 3 & 4 \cr
7 & 1 & 1 & {\color{red} 0} & 2 & 1 \cr
2 & 0 & 0 & {\color{red} 0} & 3 & 0 \cr
8 & 1 & 0 & {\color{red} 1} & 1 & 0 \cr
4 & 1 & 0 & {\color{red} 1} & 1 & 1 \cr
5 & 1 & 1 & {\color{red} 1} & 1 & 1 \cr
9 & 1 & 1 & {\color{red} 1} & 1 & 1 \cr
1 & 1 & 1 & {\color{red} 1} & 2 & 0 \cr
3 & 1 & 1 & {\color{red} 1} & 2 & 0 \cr
0 & 1 & 1 & {\color{red} 1} & 2 & 1 \cr
6 & 1 & 1 & {\color{red} 1} & 3 & 0}, \quad  
\bordermatrix{
  & 0 & {\color{red} 1} & 2 & 3 & 4 \cr
2 & 0 & {\color{red} 0} & 0 & 3 & 0 \cr
8 & 1 & {\color{red} 0} & 1 & 1 & 0 \cr
4 & 1 & {\color{red} 0} & 1 & 1 & 1 \cr
7 & 1 & {\color{red} 1} & 0 & 2 & 1 \cr
5 & 1 & {\color{red} 1} & 1 & 1 & 1 \cr
9 & 1 & {\color{red} 1} & 1 & 1 & 1 \cr
1 & 1 & {\color{red} 1} & 1 & 2 & 0 \cr
3 & 1 & {\color{red} 1} & 1 & 2 & 0 \cr
0 & 1 & {\color{red} 1} & 1 & 2 & 1 \cr
6 & 1 & {\color{red} 1} & 1 & 3 & 0}, \quad  
\bordermatrix{
  & {\color{red} 0} & 1 & 2 & 3 & 4 \cr
2 & {\color{red} 0} & 0 & 0 & 3 & 0 \cr
8 & {\color{red} 1} & 0 & 1 & 1 & 0 \cr
4 & {\color{red} 1} & 0 & 1 & 1 & 1 \cr
7 & {\color{red} 1} & 1 & 0 & 2 & 1 \cr
5 & {\color{red} 1} & 1 & 1 & 1 & 1 \cr
9 & {\color{red} 1} & 1 & 1 & 1 & 1 \cr
1 & {\color{red} 1} & 1 & 1 & 2 & 0 \cr
3 & {\color{red} 1} & 1 & 1 & 2 & 0 \cr
0 & {\color{red} 1} & 1 & 1 & 2 & 1 \cr
6 & {\color{red} 1} & 1 & 1 & 3 & 0}, \quad  
\end{align*}
\end{example}

\begin{proposition}
If $D \in \R^{m \times n}$, and the running time of the stable sort algorithm
$S$ is $T(m)$, then \autoref{alg:stablelex} sorts in time $O(nT(m))$.
\end{proposition}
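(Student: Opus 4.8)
The plan is to bound the work of \autoref{alg:stablelex} by a direct accounting of its loop. The algorithm performs a single \texttt{for} loop over $j = p, p-1, \ldots, 1$, and the body of the loop executes exactly one call to the stable sort $S$ on the $m$ rows of $D'$, using as the sort key the scalar entries of a single column $a_j$. First I would observe that each key comparison in such a pass is a comparison of two real numbers, hence costs $O(1)$, so one pass of $S$ runs in time $T(m)$ by hypothesis.

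Next I would count iterations. The column sequence $(a_1,\dots,a_p)$ has length $p$, so the loop runs $p$ times and the total cost is $p\cdot T(m)$ plus the $O(1)$ bookkeeping of the assignment $D' := D$ (or $O(m)$ if one charges for the initial copy, which is absorbed since $T(m) = \Omega(m)$: any sorting algorithm must at least read its $m$ inputs). Since each $a_i \in \{0,\dots,n-1\}$ and in the intended use one sorts on at most each of the $n$ columns, $p \le n$, which gives the claimed bound $O(nT(m))$.

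I expect no real obstacle here: this is essentially a one-line argument, and the only points worth stating carefully are that the per-iteration cost is $T(m)$ rather than something larger — because each stable sort is keyed by a single scalar column, not by entire rows — and that the number of iterations is $p$, with $p \le n$ under the stated indexing. Correctness of the resulting order as a lexicographic sorting is a separate claim and is not needed for the running-time bound, so I would not address it in this proof.
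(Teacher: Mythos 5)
Your accounting is correct and is exactly the argument the paper leaves implicit: the proposition is stated without proof, since the loop makes one call to $S$ per column in the sequence, each costing $T(m)$, for a total of $p\,T(m) = O(nT(m))$ when $p \le n$. Your added remarks (single-scalar keys, the initial copy being absorbed, and correctness being a separate claim) are all sound and do not change the approach.
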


Thus if the stable sort algorithm $S$ is a comparison sort, then the running
time of \autoref{alg:stablelex} is bounded below by $\Omega(nm\log(m))$.
For example, if Merge sort \cite{Knuth:1998fk} was used, which has running time
of $O(m\log(m))$, then the running time of StableLexSort on $D$ would be
$O(nm\log(m))$.

\section{QuickLexSort}
\label{sec:qls}
Here we present a new algorithm for lexicographical sorting called
\emph{QuickLexSort}.  We will show that the running time of QuickLexSort is
comparable to StableLexSort when sorting a single matrix. Moreover, we will
demonstrate that QuickLexSort is considerably faster than StableLexSort when
performing multiple lexicographic sorts of related sub-matrices.

The proposed algorithm QuickLexSort first requires each column of $D$ to be
independently sorted and stored. The results are stored in the matrix $Q \in \Z_+^{m \times
n}$ where the $j$th column $Q_{\cdot j}$ of $Q$ stores the row indices
$\{0,\ldots,m\}$ after sorting the $j$th column $D_{\cdot j}$ of $D$.  

\begin{example}
$$D = 
\bordermatrix{
  & 0 & 1 & 2 & 3 & 4 \cr
0 & 1 & 1 & 1 & 2 & 1 \cr
1 & 1 & 1 & 1 & 2 & 0 \cr
2 & 0 & 0 & 0 & 3 & 0 \cr
3 & 1 & 1 & 1 & 2 & 0 \cr
4 & 1 & 0 & 1 & 1 & 1 \cr
5 & 1 & 1 & 1 & 1 & 1 \cr
6 & 1 & 1 & 1 & 3 & 0 \cr
7 & 1 & 1 & 0 & 2 & 1 \cr
8 & 1 & 0 & 1 & 1 & 0 \cr
9 & 1 & 1 & 1 & 1 & 1 }, \;
Q = 
\bordermatrix{
  & 0 & 1 & 2 & 3 & 4 \cr
  & 2 & 4 & 2 & 4 & 3 \cr
  & 4 & 2 & 7 & 5 & 8 \cr
  & 5 & 8 & 4 & 9 & 6 \cr
  & 6 & 5 & 6 & 8 & 2 \cr
  & 0 & 6 & 0 & 0 & 1 \cr
  & 7 & 7 & 1 & 7 & 4 \cr
  & 9 & 9 & 9 & 1 & 7 \cr
  & 8 & 1 & 8 & 3 & 9 \cr
  & 3 & 3 & 3 & 2 & 0 \cr
  & 1 & 0 & 5 & 6 & 5}. $$
A matrix $D$ and the matrix $Q$ storing the sort of the columns of $D$
described above. E.g., reading down the $0$th column of Q, for column $0$ of $D$, the smallest
entry is in row $2$, followed by row $4$, followed by row $5$, etc.
\end{example}

The QuickLexSort algorithm sorts (conceptually) by iteratively appending 
columns to the current matrix and sorting, until the desired sequence of columns is reached. That is,
\autoref{alg:qlsrefine} refines the current sort with respect to the sequence of columns
$(a_1,\ldots,a_j)$ to give a sort with respect to the sequence of columns
$(a_1,\ldots,a_{j+1})$. In some sense this is opposite of StableLexSort.  In
StableLexSort one stable sorts from the least important column to the most
important. In QuickLexSort one sorts from the most important column to the
least important, refining the ranking vector as it goes.

\autoref{alg:qlsrefine} (QuickLexSortRefine), is the core of the methods described here.
\autoref{alg:qlsrefine} takes as input the matrix $D$ [$D$], the sorting of the
columns of $D$ [$Q$], the column to refine $L$ by [$i$], and the
current ranking vector [$L$]. It returns the refined ranking vector $L'$. That
is, if the input ranking vector $L$ represents the sorting of the rows of $D$
(restricted to some sequence of columns), the returned ranking vector $L'$
represents the refined sorting where we consider appending the $ith$ column of
$D$. Again, if the input ranking vector $L$ represents the lexicographical
sorting of a matrix
\begin{equation*}
    D' = 
\bordermatrix{
  & j_0 & \cdots & j_p \cr
0 &   D_{0j_0} & \cdots & D_{0j_p} \cr
1 &   D_{1j_0} & \cdots & D_{1j_p} \cr
\vdots  & \vdots & \ddots & \vdots \cr
n &   D_{nj_0} & \cdots & D_{nj_p} }
\end{equation*}
the ranking vector $L'$ output from \autoref{alg:qlsrefine} represents the lexicographical sorting of the matrix
\begin{equation*}
D'' = 
\bordermatrix{
    & j_0 & \cdots & j_p & i\cr
  0 &   D_{0j_0} & \cdots & D_{0j_p} & D_{0i} \cr
1 &   D_{1j_0} & \cdots & D_{1j_p} & D_{1i}  \cr
\vdots  & \vdots & \ddots & \vdots \cr
n &   D_{nj_0} & \cdots & D_{nj_p} & D_{ni}  }.
\end{equation*}

\begin{algorithm}[!h]
\begin{algorithmic}[1]
    \REQUIRE $D \in \R^{m\times n}$, $Q \in \Z^{m\times n}$, $i \in \{0,\ldots,n-1\}$, $L \in \Z^m$.
    \ENSURE $L' \in Z^m$.
    \STATE $L' := \ve 0 \in Z^m$. 
    \STATE $\IDval := \ve 0 \in Z^m$. \# Records most recent value in $D$ w.r.t.\ ID.
    \STATE $\IDvalInit := \ve 0 \in Z^m$.
    \STATE $subID := \ve 0 \in Z^m$. \# Records subID of $L$.
    \STATE $newCount := \ve 0 \in Z^m$. \# Records count of refinements of each input ID of L.
    \FOR{ $j = 0,\ldots,m-1$ }
        \IF{ $\IDvalInit[L[Q[j,i]]] == 0$} 
            \STATE $\IDvalInit[L[Q[j,i]]] := 1$.
            \STATE $\IDval[L[Q[j,i]]] := D[Q[j,i],i]$.
        \ELSE
            \IF{ $\IDval[L[Q[j,i]]]\ != D[Q[j,i],i]$}
                \STATE $\IDval[L[Q[j,i]]] := D[Q[j,i],i]$.
                \STATE $newCount[L[Q[j,i]] := newCount[L[Q[j,i]] + 1$.
            \ENDIF
        \ENDIF
        \STATE $subID[Q[j,i]]] := newCount[L[Q[j,i]]]$.
    \ENDFOR
    \STATE $numNewID := 0 \in \Z^M$.
    \STATE $numNewID[m-1] := \sum_{j=0}^{m-2} newCount[j]$.
    \FOR{ $j = m-2,\ldots,1$ }
        \STATE $numNewID[j] := numNewID[j+1] - newCount[j]$.
    \ENDFOR
    \FOR{ $j = 0,\ldots, m-1$}
        \STATE $L'[j] := L[j] + numNewID[L[j]] + subID[j]$.
    \ENDFOR
    \RETURN $(L')$.
\end{algorithmic}
\caption{QuickLexSortRefine}
\label{alg:qlsrefine}
\end{algorithm}

Intuitively, the task of \autoref{alg:qlsrefine} is to 1) preserve the current
ordering, i.e.\ if row $j$ was lexicographically smaller than row $k$, then
this is true in the new order, 2) all previous rows that were lexicographically
equal should be sorted given the newly appended column $i$. The novelty of
\autoref{alg:qlsrefine} is that it performs the second item above in linear
time using the pre-computed ordering of the newly appended column $i$. 

\begin{example}
Consider matrix $D$ in \autoref{ex:D}. Suppose $L$ is the ranking vector of the
lexicographical ordering of the sub-matrix given by columns $\{0,3\}$ of $D$
and we then perform \autoref{alg:qlsrefine} with $i=4$. Thus, part of the input would be 
\begin{equation*}
\bordermatrix{
  & L\cr 
0 & 2\cr
1 & 2\cr
2 & 0\cr
3 & 2\cr
4 & 1\cr
5 & 1\cr
6 & 3\cr
7 & 2\cr
8 & 1\cr
9 & 1}, \quad
\bordermatrix{
  & D_i\cr
0 & 1\cr
1 & 0\cr
2 & 0\cr
3 & 0\cr
4 & 1\cr
5 & 1\cr
6 & 0\cr
7 & 1\cr
8 & 0\cr
9 & 1}, \quad
\bordermatrix{
  & Q_i\cr
0 & 3\cr
1 & 8\cr
2 & 6\cr
3 & 2\cr
4 & 1\cr
5 & 4\cr
6 & 7\cr
7 & 9\cr
8 & 0\cr
9 & 5}.
\end{equation*}

The progression of the vectors $IDval$, $subID$, and $newCount$ are shown from
left to right as the for loop on line 6 goes from $j=0$ to $j=m-1$, Note, ``$\cdot$''
signifies unassigned values.
\begin{align*}
    IDval &= 
\bordermatrix{
j=  &  0      &  1      &  2      &  3  &  4  &  5  &  6  &  7  &  8  &  9  \cr
0   &  \cdot  &  \cdot  &  \cdot  &  0  &  0  &  0  &  0  &  0  &  0  &  0  \cr
1   &  \cdot  &  0      &  0      &  0  &  0  &  1  &  1  &  1  &  1  &  1  \cr
2   &  0      &  0      &  0      &  0  &  0  &  0  &  1  &  1  &  1  &  1  \cr
3   &  \cdot  &  \cdot  &  0      &  0  &  0  &  0  &  0  &  0  &  0  &  0  }\\
\phantom{A}\\
newCount &= 
\bordermatrix{
j=  &  0  &  1  &  2  &  3  &  4  &  5  &  6  &  7  &  8  &  9  \cr
0   &  0  &  0  &  0  &  0  &  0  &  0  &  0  &  0  &  0  &  0  \cr
1   &  0  &  0  &  0  &  0  &  0  &  1  &  1  &  1  &  1  &  1  \cr
2   &  0  &  0  &  0  &  0  &  0  &  0  &  1  &  1  &  1  &  1  \cr
3   &  0  &  0  &  0  &  0  &  0  &  0  &  0  &  0  &  0  &  0  }\\
\phantom{A}\\
subID &= 
\bordermatrix{
j=  &  0      &  1      &  2      &  3      &  4      &  5      &  6      &  7      &  8      &  9  \cr
0   &  \cdot  &  \cdot  &  \cdot  &  \cdot  &  \cdot  &  \cdot  &  \cdot  &  \cdot  &  1      &  1  \cr
1   &  \cdot  &  \cdot  &  \cdot  &  \cdot  &  0      &  0      &  0      &  0      &  0      &  0  \cr
2   &  \cdot  &  \cdot  &  \cdot  &  0      &  0      &  0      &  0      &  0      &  0      &  0  \cr
3   &  0      &  0      &  0      &  0      &  0      &  0      &  0      &  0      &  0      &  0  \cr
4   &  \cdot  &  \cdot  &  \cdot  &  \cdot  &  \cdot  &  1      &  1      &  1      &  1      &  1  \cr
5   &  \cdot  &  \cdot  &  \cdot  &  \cdot  &  \cdot  &  \cdot  &  \cdot  &  \cdot  &  \cdot  &  1  \cr
6   &  \cdot  &  \cdot  &  0      &  0      &  0      &  0      &  0      &  0      &  0      &  0  \cr
7   &  \cdot  &  \cdot  &  \cdot  &  \cdot  &  \cdot  &  \cdot  &  1      &  1      &  1      &  1  \cr
8   &  \cdot  &  0      &  0      &  0      &  0      &  0      &  0      &  0      &  0      &  0  \cr
9   &  \cdot  &  \cdot  &  \cdot  &  \cdot  &  \cdot  &  \cdot  &  \cdot  &  1      &  1      &  1  }
\end{align*}

\end{example}

Next we present \autoref{alg:qls} (QuickLexSort) which, we will prove,
lexicographically sorts a sub-matrix restricted to a sequence of columns of
$D$, requiring $Q$ an initial sorting of the columns of $D$.

\begin{algorithm}[!h]
\begin{algorithmic}[1]
    \REQUIRE $D \in \R^{m\times n}$, $Q \in \Z^{m\times n}$, $(a_1,\ldots,a_p)$ where $ a_i \in \{0,\ldots,n-1\}$ $\forall i$.
    \ENSURE $L' \in Z^m$.
    \STATE $L' := \ve 0 \in \R^n$.
    \FORALL{ $i \in \{a_1,\ldots,a_p\}$}
        \STATE $L' := QuickLexSortRefine(D,Q,i,L')$.
    \ENDFOR
    \RETURN $L'$.
\end{algorithmic}
\caption{QuickLexSort}
\label{alg:qls}
\end{algorithm}

We now prove the validity and running times of \autoref{alg:qlsrefine}
and \autoref{alg:qls}. 

\begin{lemma} 
\label{lem:qlsrefine}
If $D \in \R^{m \times n}$, $Q \in \R^{m \times n}$ where the $j$th column of
$Q$ stores the sorting of the $j$th column of
$D$, $L$ is the ranking vector of the lexicographical sorting of the
sub-matrix of $D$ determined the sequence of columns
$(a_1,\ldots,a_p)$, and $i \in \{0,\ldots,n-1\}$, then 
\autoref{alg:qlsrefine} returns the ranking vector $L'$ of the lexicographical
sorting of the sub-matrix of $D$ determined by the sequence of columns
$(a_1,\ldots,a_p,i)$.
\end{lemma}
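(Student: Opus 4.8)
The plan is to track exactly what each auxiliary vector in \autoref{alg:qlsrefine} computes, and then verify that the final formula $L'[j] = L[j] + numNewID[L[j]] + subID[j]$ produces precisely the ranking vector of the refined ordering. Write $\ell$ for the number of distinct blocks in $L$, so $L$ takes values in $\{0,\ldots,\ell-1\}$, and for each block index $b$ let $R_b = \{\,j : L[j]=b\,\}$ be the set of rows in block $b$. Since $L$ is the ranking vector of the lexicographic order on columns $(a_1,\ldots,a_p)$, two rows lie in the same block iff they agree on all of columns $a_1,\ldots,a_p$, and the blocks are totally ordered. Appending column $i$ refines each block $R_b$ internally according to the values $D[\cdot,i]$, while leaving the relative order of distinct blocks untouched; this is exactly the ordering whose ranking vector we must show the algorithm outputs.

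The first \textbf{for} loop (lines 6--17) walks through the rows in increasing order of the $i$th column, using $Q[\cdot,i]$. I would prove by induction on $j$ the following invariant: after processing the $j$th iteration, for each block $b$, $newCount[b]$ equals the number of \emph{distinct} values of $D[\cdot,i]$ seen so far among rows of $R_b$, minus one (i.e.\ the index of the current sub-block within block $b$), $\IDval[b]$ records the most recently seen such value, and $\IDvalInit[b]$ flags whether any row of $R_b$ has been seen. Because $Q[\cdot,i]$ lists rows in nondecreasing order of $D[\cdot,i]$, within each fixed block the rows are encountered in nondecreasing $D[\cdot,i]$-order, so the running counter $newCount[b]$ increments exactly when the value strictly increases; hence at the end $subID[Q[j,i]]$ is assigned the correct \emph{local rank} of that row's value within its block — i.e.\ $subID[j] \in \{0,\ldots, c_b - 1\}$ where $c_b$ is the number of distinct $D[\cdot,i]$-values in block $b$, and $subID[j] < subID[j']$ iff $D[j,i] < D[j',i]$ for $j,j'$ in the same block, with equality iff the column values agree. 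This is the step where the pre-sorted matrix $Q$ does the real work, and it is the part I expect to require the most careful bookkeeping — in particular checking that the single shared counter vectors indexed by block behave correctly even though rows of different blocks are interleaved in the traversal order.

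The second and third loops convert these local ranks into a global offset. After the first loop, $newCount[b]$ equals $c_b - 1$, the number of \emph{new} block-boundaries created inside block $b$. I would show $numNewID[b] = \sum_{b' < b} newCount[b'] = \sum_{b'<b}(c_{b'}-1)$, which is precisely the total number of extra blocks inserted strictly before block $b$ by the refinement; the backward recursion $numNewID[j] := numNewID[j+1] - newCount[j]$ starting from $numNewID[m-1] := \sum_{j=0}^{m-2} newCount[j]$ computes exactly this prefix sum (one checks the boundary index $m-1$ versus $\ell-1$ is harmless since $newCount$ vanishes outside $\{0,\ldots,\ell-1\}$). Finally, for a row $j$ in block $b$, the new rank must be: (old rank of block $b$) $+$ (number of new sub-blocks inserted before block $b$) $+$ (local rank of $j$ within block $b$), which is $L[j] + numNewID[L[j]] + subID[j]$ — exactly line 23. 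To finish I would verify the four defining properties of a ranking vector for this output: it is integer-valued in the right range (the maximum value is $(\ell-1) + \sum_{b'<\ell-1}(c_{b'}-1) + (c_{\ell-1}-1) = \sum_b c_b - 1$, the correct number of refined blocks minus one); rows equal in the refined order get equal $L'$ (same block, same column value $\Rightarrow$ same $subID$); rows strictly smaller get strictly smaller $L'$ (split into the case of different $L$-blocks, where the $numNewID$ and $subID$ terms are dominated by the gap in $L$ together with monotonicity of $numNewID$, and the case of the same block, handled by monotonicity of $subID$); and minimality of $\sum_j L'[j]$, which follows since consecutive refined blocks receive consecutive integer labels with no gaps.
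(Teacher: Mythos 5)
Your proposal is correct and follows essentially the same route as the paper's own proof: track the invariants of $newCount$, $IDval$, and $subID$ over the traversal in $Q_{\cdot i}$-order, interpret $numNewID$ as the prefix sum counting new sub-blocks inserted before each old block, and then verify the order-preservation and within-block refinement properties of $L'[j] = L[j] + numNewID[L[j]] + subID[j]$. Your write-up is in fact slightly more careful than the paper's in two places: you correctly note that $newCount[b]$ ends at $c_b - 1$ (the number of distinct values minus one, whereas the paper's text says "equal to the number of unique elements"), and you explicitly check the minimality/no-gaps condition in the definition of a ranking vector, which the paper leaves implicit.
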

\begin{proof}
We need to prove 
\begin{enumerate}
    \item if $L[j]< L[k]$ then $L'[j]< L'[k]$, and
    \item when $L[j]= L[k]$ we have $D_{ji} < D_{ki}$ if and only if $L'[j] < L'[k]$.
\end{enumerate}
First, we note the for
loop on line 6 visits the elements of $D_{\cdot i}$ in increasing order by using the
data structure $Q$. Thus, for all ranks $r$ in L, all elements of $D_i$ of rank
$r$ will be visited in increasing order. This is the crux of the validity of
\autoref{alg:qlsrefine} and is worthwhile to repeat. Given the current
lexicographical order given by $L$, every row of $D$, restricted to the
sequence of columns $(a_1,\ldots,a_p)$ has some rank $r$. Because \autoref{alg:qlsrefine}
uses $Q_{\cdot i}$, the algorithm will visit all the rows of current rank $r$ 
in the order given by the new column $i$ of interest.

We first claim that $newCount[r]$ is equal to the number of unique elements
of $D_{\cdot i}$ of rank $r$, with respect to $L$. The data structure $IDval[r]$ records
the most recently observed value of $D_{\cdot i}$ of rank $r$. The data structure
$IDvalInit[r]$ simply denotes if nothing has been observed yet. Thus, when
we observe an element of $D_{\cdot i}$ of rank $r$ that differs from $IDval[r]$,
we update $IDval[r]$ (line 9 and line 12) and increase $newCount[r]$ by
one (line 13).

Second, we claim that $subID$ restricted to all elements of rank $r$
is a ranking vector over the elements of $D_{\cdot i}$ restricted to elements of rank $r$.
More specifically, $subID[j]$ is equal to the number of unique entries in $D_{\cdot i}$
of rank $L[j]$ strictly less than $D_{j i}$. The variable $subID[j]$ is initialized 
to be zero and is set to the current value of $newCount[L[j]]$ (line 16). That is,
$subID[j]$ is set to the current number of unique elements of $D_{\cdot i}$ of
rank $L[j]$.

The vector $numNewID$ is simply a partial sum (offset by one index) of the vector
$newCount$. We can now prove the two important properties required to complete
the proof.  Suppose $L[j] < L[k]$ and consider

\begin{align*}
    & L'[k]  -  L'[j] \\ 
    = & L[k] + numNewID[L[k]] + subID[k] -  L[j] - numNewID[L[j]] - subID[j] \\ 
    = &L[k] + \sum_{l=0}^{L[k]-1} newCount[l] + subID[k] - L[j] - \sum_{l=0}^{L[j]-1} newCount[l] - subID[j] 
\end{align*}

and note $subID[j] \leq newCount[L[j]]$.
Thus we have 
\begin{align*}
    = &L[k] + \sum_{l=0}^{L[k]-1} newCount[l] + subID[k] - L[j] - \sum_{l=0}^{L[j]-1} newCount[l] - subID[j] \\
    = &L[k] + \sum_{l=L[j]-1}^{L[k]-1} newCount[l] + subID[k] - L[j]- subID[j] \\
    \geq &L[k] + \sum_{l=L[j]}^{L[k]-1} newCount[l] + subID[k] - L[j] \geq 0,
\end{align*}
and therefore $L'[j] < L'[k]$.

Lastly, if $L[j] = L[k]$ then considering the definition of $L[j]$ and $L[k]$
(line 24) we see the only variable is $subID$. We have already shown
that $subID$ is a ranking vector of items of the same rank. Thus $D_{ji} < D_{ki}$
if and only if $subID[j] < subID[k]$ and the claim is proved.
\end{proof}

\begin{lemma} 
\label{lem:qls}
If $D \in \R^{m \times n}$, $Q \in \R^{m \times n}$ where the $j$th column of
$Q$ stores the sorting of the $j$th column of
$D$, $(a_1,\ldots,a_p)$ where $ a_i \in \{0,\ldots,n-1\}$ $\forall i$, then
\autoref{alg:qls} returns the ranking vector $L'$ of the lexicographical
sorting  of the sub-matrix of $D$ determined by the sequence of columns
$(a_1,\ldots,a_p)$.
\end{lemma}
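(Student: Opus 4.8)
The plan is to prove \autoref{lem:qls} by induction on the number of columns that have been appended, using \autoref{lem:qlsrefine} as the engine of the inductive step. Write $L^{(0)} := \ve 0 \in \Z^m$ for the value of $L'$ on entering the \textbf{for all} loop of \autoref{alg:qls}, and for $1 \le j \le p$ write $L^{(j)}$ for the value of $L'$ after the $j$th iteration, so that $L^{(j)} = QuickLexSortRefine(D,Q,a_j,L^{(j-1)})$; here it is essential that the loop visits the columns in the order $a_1, a_2, \ldots, a_p$ dictated by the input sequence (the set-brace notation in the pseudocode is meant to denote this sequence). The statement to establish by induction is: for every $0 \le j \le p$, the vector $L^{(j)}$ is the ranking vector of the lexicographical sorting of the sub-matrix of $D$ determined by the sequence of columns $(a_1,\ldots,a_j)$.

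For the base case $j = 0$, the sub-matrix determined by the empty sequence of columns has all $m$ rows equal (each is the empty vector), so by conditions (1)–(2) and the minimality condition (4) of the ranking-vector definition, its unique ranking vector is $\ve 0 = L^{(0)}$. For the inductive step, suppose the claim holds for some $0 \le j < p$, so $L^{(j)}$ is the ranking vector of the lexicographical sorting of the sub-matrix given by $(a_1,\ldots,a_j)$. Applying \autoref{lem:qlsrefine} with $L = L^{(j)}$ and $i = a_{j+1}$ shows that $QuickLexSortRefine(D,Q,a_{j+1},L^{(j)})$ returns the ranking vector of the lexicographical sorting of the sub-matrix given by $(a_1,\ldots,a_j,a_{j+1})$; but this return value is precisely $L^{(j+1)}$, which closes the induction. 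Taking $j = p$ and observing that $L^{(p)}$ is exactly the vector returned by \autoref{alg:qls} gives the lemma.

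Essentially all of the combinatorial content is already discharged inside \autoref{lem:qlsrefine}, so the only points needing care are bookkeeping: that the accumulator $L'$ is threaded unchanged from one call into the next, that the loop iterates over $(a_1,\ldots,a_p)$ in order, and that the $j=0$ base case is stated consistently with the ranking-vector definition. The mild edge case of a zero-column sub-matrix can alternatively be sidestepped by taking $j=1$ as the base case and checking directly that refining $\ve 0$ by the single column $a_1$ yields the ranking vector of that one-column sub-matrix — which is again just \autoref{lem:qlsrefine} applied with $L = \ve 0$. I do not expect any genuine obstacle here; \autoref{lem:qls} is an immediate corollary of \autoref{lem:qlsrefine} via this induction.
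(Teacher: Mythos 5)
Your proof is correct and is simply the fully written-out version of the paper's own argument: the paper's proof of this lemma is the one-line observation that \autoref{alg:qlsrefine} refines the ranking vector for each newly appended column, which is exactly your induction with \autoref{lem:qlsrefine} as the inductive step. Your explicit treatment of the base case and of the loop-order bookkeeping adds rigor but no new ideas.
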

\begin{proof}
Since \autoref{alg:qlsrefine} refines the ranking vector for each newly
appended column, the result follows.
\end{proof}

\begin{lemma} 
\autoref{alg:qlsrefine} runs in time $O(m)$ and space $O(m)$.
\label{lem:qlsrefineruntime}
\end{lemma}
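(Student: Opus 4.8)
The plan is to verify both bounds by a direct line-by-line accounting of \autoref{alg:qlsrefine} under the standard unit-cost RAM model, in which an array read or write, an integer arithmetic operation, an increment, and a comparison of two entries of $D$ each cost $O(1)$. Since every loop in the algorithm has at most $m$ iterations and every loop body does only a bounded number of such unit operations, the running time is $O(m)$; since every auxiliary structure the algorithm allocates is a length-$m$ vector, the working space is $O(m)$.

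\textbf{Space.} Beyond the inputs $D$, $Q$, $i$, $L$, the algorithm allocates $L'$, $\IDval$, $\IDvalInit$, $subID$, $newCount$, and $numNewID$, each a vector of length $m$ (the index ranges in lines 6--25 never exceed $\{0,\dots,m-1\}$). Hence the auxiliary space is $O(m)$, and initializing each of these vectors to $\ve 0$ in lines 1--5 costs $O(m)$ time.

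\textbf{Time.} The \texttt{for} loop beginning at line 6 executes exactly $m$ iterations. Its body performs a constant number of array accesses — here the nested indexing $L[Q[j,i]]$ is just two consecutive $O(1)$ lookups — one real comparison $\IDval[L[Q[j,i]]] \neq D[Q[j,i],i]$, and a constant number of assignments and integer increments; so each iteration is $O(1)$ and the loop runs in $O(m)$. A point worth noting is that $\IDval$ and $\IDvalInit$ are initialized once, before the loop (lines 2--3), not inside it, so there is no hidden per-iteration re-initialization cost. The single summation in line 19, $numNewID[m-1] = \sum_{j=0}^{m-2} newCount[j]$, takes $O(m)$; the loop at line 20 is a reverse partial-sum pass of at most $m$ iterations with $O(1)$ work each, hence $O(m)$; and the final loop at line 23 again runs $m$ iterations of $O(1)$ work, hence $O(m)$. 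Adding the $O(m)$ from initialization gives total time $O(m)$.

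\textbf{Main obstacle.} There is no deep difficulty here; the only thing requiring care is making the cost model explicit — namely, confirming that the (singly) nested array indexing and the comparisons of entries of $D$ are genuinely constant time, and that none of the ``assignment'' steps conceals a loop over the vectors. Once the unit-cost RAM model is fixed, the lemma is immediate bookkeeping over the four passes (lines 6--17, line 19, lines 20--22, lines 23--25), each of which is $O(m)$.
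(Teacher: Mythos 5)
Your proposal is correct and follows essentially the same route as the paper's proof, which simply observes that the algorithm consists of three loops of $m$ iterations with constant-time bodies and only length-$m$ auxiliary vectors. Your version is more careful in a few harmless respects (making the unit-cost RAM model explicit, noting that the summation on line 19 is itself an $O(m)$ pass, and itemizing the auxiliary vectors), but nothing in the argument differs in substance.
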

\begin{proof}
There are only three loops in \autoref{alg:qlsrefine}, each of them
repeated $m$ times. Each inner operation is constant time.  The only space
requirements are determined by column vectors of the $m \times n $ input
matrices and the vectors of length $m$.
\end{proof}

\begin{lemma} 
\autoref{alg:qls} runs in time $O(mp)$ and space $O(mn)$.
\end{lemma}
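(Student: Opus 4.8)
The plan is to bound the two contributions to Algorithm~\ref{alg:qls} separately: the cost of the main loop over the sequence $(a_1,\ldots,a_p)$, and the one-time storage needed to hold the inputs and working vectors. For the running time, observe that Algorithm~\ref{alg:qls} does $O(1)$ work to initialize $L'$ and then invokes \autoref{alg:qlsrefine} exactly $p$ times, once for each $i \in \{a_1,\ldots,a_p\}$. By \autoref{lem:qlsrefineruntime}, each such call runs in time $O(m)$. Since the refinement calls are performed sequentially and nothing else of nontrivial cost happens in the loop body, the total time is $p \cdot O(m) + O(1) = O(mp)$, as claimed.

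For the space bound, I would note that Algorithm~\ref{alg:qls} itself allocates only the single vector $L' \in \Z^m$, contributing $O(m)$. Each call to \autoref{alg:qlsrefine} uses, by \autoref{lem:qlsrefineruntime}, $O(m)$ additional working space, and — crucially — this scratch space can be reused (freed and reallocated, or simply overwritten) from one call to the next, so the refinement calls do not accumulate space; the peak is $O(m)$ across all of them. The dominant term is therefore the storage of the two input matrices $D \in \R^{m\times n}$ and $Q \in \Z^{m\times n}$, each of which occupies $\Theta(mn)$ cells. Adding these gives total space $O(mn) + O(m) = O(mn)$.

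The only point that warrants a moment's care — and the closest thing to an obstacle — is the bookkeeping about whether the $O(mn)$ cost of $Q$ should be charged here at all. Since $Q$ is passed in as a precomputed input (the pre-processing step discussed in the text, which is run once for the whole matrix and not per sub-matrix), it is legitimate to count it toward the space complexity of \autoref{alg:qls}; alternatively, if one regards $D$ and $Q$ as given in read-only external storage, the \emph{auxiliary} space of \autoref{alg:qls} is merely $O(m)$. Either way the stated $O(mn)$ bound holds, and I would state it in the form used in the lemma, remarking parenthetically that the additive $O(m)$ working space is subsumed. No induction or delicate estimate is needed; the proof is a one-line composition of the per-call bounds from \autoref{lem:qlsrefineruntime} with the trivial observation that the loop runs $p$ times.
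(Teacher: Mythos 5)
Your proposal is correct and follows essentially the same argument as the paper: $p$ sequential calls to \autoref{alg:qlsrefine}, each costing $O(m)$ time by \autoref{lem:qlsrefineruntime}, give $O(mp)$ total time, and the space is dominated by the $m\times n$ input matrices plus $O(m)$ working vectors. Your added remarks on reusing the scratch space and on whether to charge $Q$ as input or auxiliary storage are sensible elaborations but do not change the approach.
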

\begin{proof}
There are $p$ calls made to \autoref{alg:qlsrefine} which by
\autoref{lem:qlsrefineruntime} imply the total running time is $O(mp)$.
The only space requirments are determined by the $m \times n $input matrices
and the vectors of length $m$.
\end{proof}

Recall that both \autoref{alg:qlsrefine} and \autoref{alg:qls} require the
columns of $D$ to be sorted and recorded in the input $Q$. Thus to
lexicographically sort a matrix $D \in \R^{m \times n}$ using
\autoref{alg:qls} requires $O(nm\log(m) + nm) = O(nm\log(m))$, where we use an
$O(m\log(m))$ comparison sort to find $Q$.

\section{Sorting Sub-Matrices}
\label{sec:sortingsubmat}

Consider the problem of sorting all sub-matrices of $D$ given by every possible
sequence of columns. 

\begin{problem}[\bf Sort All Sub-Matrices Given By Column Sequences] $\phantom{a}$ \\
Let $D \in \R^{m \times n}$. 
\begin{itemize}
    \item For every sequence of columns $(a_1,\ldots,a_p)$ where, $1 \leq p \leq n$, $a_i \neq a_j \, \forall i,j$, $a_i \in \{0,\ldots,n-1\}$ $\forall i$:
    \begin{itemize}
        \item Lexicographically sort the sub-matrix of $D$ determined by the sequence of columns $(a_1,\ldots,a_p)$.
    \end{itemize}
    \label{prob:allseq}
\end{itemize}
\end{problem}

Also consider the sub-problem of sorting all sub-matrices of 
$D$ given by every possible subset of columns.

\begin{problem}[\bf Sort All Sub-Matrices Given By Column Sets] $\phantom{a}$ \\
Let $D \in \R^{m \times n}$. 
\begin{itemize}
    \item For every non-empty subset of columns $\{a_1,\ldots,a_p\} \subseteq \{0,\ldots,n-1\}$:
    \begin{itemize}
        \item Lexicographically sort the sub-matrix of $D$ determined by the set of columns $\{a_1,\ldots,a_p\}$.
    \end{itemize}
    \label{prob:allsubset}
\end{itemize}
\end{problem}

In \autoref{prob:allseq} there are $\sum_{i=1}^n \frac{n!}{(n-i)!}$ non-empty
sub-matrices to consider.  In \autoref{prob:allsubset} there are $2^n -1 $
non-empty sub-matrices to consider. Both StableLexSort (\autoref{alg:stablelex})
and QuickLexSort (\autoref{alg:qls}) can be used to solve \autoref{prob:allseq}
and \autoref{prob:allsubset}. One simply enumerates the set of sub-matrices and
applies either algorithm.

We now present how the core of the QuickLexSort Algorithm
(\autoref{alg:qlsrefine}) lends itself ideally to \autoref{prob:allseq} and
\autoref{prob:allsubset}. That is, we can use \autoref{alg:qlsrefine} to
efficiently sort all the nested sub-matrices. The new \autoref{alg:qlsallperms}
for \autoref{prob:allseq} enumerates all sequences of columns in a
depth-first-search (DFS) manner.  It then exploits the fact that
\autoref{alg:qlsrefine} will take a current ranking vector and refine it by
considering appending an additional column. In this way we save the current
ranking vector and refine it based on all possible ways to append a column to
the current sub-matrix. 

\begin{algorithm}
\begin{algorithmic}[1]
    \REQUIRE $D \in \R^{m\times n}$, $Q \in \Z^{m\times n}$, $(a_1,\ldots,a_p)$ where $ a_i \in \{0,\ldots,n-1\}$ $\forall i$.
    \FOR{ $i \in \{0,\ldots,n\} \setminus \{a_1,\ldots,a_p\}$}
        \STATE $L' := $QuickLexSortRefine($D$,$Q$,$i$,$L$).
        \STATE Print $L'$.
        \STATE QuickLexSortAllSeq($D$,$Q$,$(a_1,\ldots,a_p,i)$,$L'$)
    \ENDFOR
\end{algorithmic}
\caption{QuickLexSortAllSeq}
\label{alg:qlsallperms}
\end{algorithm}

\autoref{alg:qlsallperms} is initially called with QuickLexSortAllSeq($D$,$Q$,$()$,$0\in \R^m$).

\begin{lemma}
    \autoref{alg:qlsallperms} has running time $O\left(m \sum_{i=1}^n m \frac{n!}{(n-i)!}\right)$ and space requirements $O(mn)$.
\end{lemma}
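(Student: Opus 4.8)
The plan is to analyze the recursion tree of Algorithm~\ref{alg:qlsallperms} and charge to each node a cost of $O(m)$, using Lemma~\ref{lem:qlsrefineruntime}. The key observation is that each recursive call to \texttt{QuickLexSortAllSeq} with sequence $(a_1,\ldots,a_p)$ spawns, for each column $i$ not yet used, exactly one call to \autoref{alg:qlsrefine} followed by one print operation and one recursive call on the extended sequence $(a_1,\ldots,a_p,i)$. So the recursion tree is precisely the tree of all (ordered) sequences of distinct columns drawn from $\{0,\ldots,n-1\}$: its nodes at depth $p$ are in bijection with the length-$p$ sequences of distinct columns, and there are $\frac{n!}{(n-p)!}$ of them. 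Summing over depths $p=1,\ldots,n$ gives $\sum_{p=1}^n \frac{n!}{(n-p)!}$ nodes, each corresponding to one invocation of \autoref{alg:qlsrefine}.

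First I would make precise that the work done \emph{at} each node of the recursion tree — that is, the body of a single call to \texttt{QuickLexSortRefine} plus the constant bookkeeping for the \texttt{for} loop iteration and the \texttt{Print} statement — is $O(m)$: the refine call is $O(m)$ by \autoref{lem:qlsrefineruntime}, and printing a ranking vector of length $m$ is $O(m)$ as well. Multiplying the per-node cost $O(m)$ by the number of nodes $\sum_{p=1}^n \frac{n!}{(n-p)!}$ yields total running time $O\!\left(m \sum_{p=1}^n \frac{n!}{(n-p)!}\right)$, which matches the stated bound after renaming the summation index (the extra factor of $m$ inside the sum in the statement is the per-node cost, so strictly the bound should read $O\!\left(m\sum_{i=1}^n \frac{n!}{(n-i)!}\right)$, and the term-by-term summand $m\frac{n!}{(n-i)!}$ counts $m$ units of work for each of the $\frac{n!}{(n-i)!}$ length-$i$ sequences). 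For the space bound, I would note that the DFS maintains only one root-to-leaf path of ranking vectors at a time; the path has length at most $n$, each stored vector has length $m$, and \autoref{alg:qlsrefine} itself uses only $O(m)$ auxiliary space by \autoref{lem:qlsrefineruntime}, while the inputs $D$ and $Q$ occupy $O(mn)$. Hence the total space is $O(mn)$.

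The main subtlety — not an obstacle so much as a point requiring care — is the initialization and the validity hook: \autoref{alg:qlsallperms} is first called with the zero vector as the ranking vector, which is the valid ranking vector of the empty sub-matrix (all rows tie), so by \autoref{lem:qlsrefine} every refine call along the way receives a genuine ranking vector of the current sub-matrix and returns one for the extended sub-matrix. This guarantees we never do wasted or malformed work, so the node count above is exactly the number of refine calls. A second small point is that the recursion does not re-sort columns of $D$: the matrix $Q$ is computed once, outside this routine, at cost $O(nm\log m)$, which is dominated by the quantity in the statement whenever $n\ge 2$ and in any case is not part of this lemma's claimed bound. Assembling these pieces — node count times per-node cost for time, single active DFS path plus input storage for space — completes the proof.
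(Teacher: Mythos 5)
Your proof is correct and follows essentially the same approach as the paper: count the $\sum_{i=1}^n \frac{n!}{(n-i)!}$ calls to \autoref{alg:qlsrefine} (one per node of the DFS recursion tree) and multiply by the per-call cost, with the DFS path observation giving the $O(mn)$ space bound. Your observation about the apparently redundant factor of $m$ is well taken --- using the $O(m)$ per-call cost from \autoref{lem:qlsrefineruntime} you in fact obtain the tighter bound $O\bigl(m \sum_{i=1}^n \frac{n!}{(n-i)!}\bigr)$, which implies the stated one.
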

\begin{proof}
    Exactly $\frac{n!}{(n-i)!}$ calls are made to \autoref{alg:qlsrefine}, which itself has running time and space $O(nm)$.
\end{proof}

As it stands, StableLexSort could be used  
inside \autoref{alg:qlsallperms} but would not achieve the same running time.
If we replaced QuickLexSort (\autoref{alg:qlsallperms}) with StableLexSort (\autoref{alg:stablelex}) on
line 2 of \autoref{alg:qlsallperms} then the running time would
increase to $O\left(m log(m) \sum_{i=1}^n m \frac{n!}{(n-i)!}\right)$.

This highlights the distinct advantage of QuickLexSort: It is linear time to
refine the lexicographical sorting when appending a column, provided the
columns of the data matrix have been pre-sorted.

Naively one may think to use a stable sort algorithm and append the columns in
the opposite order (since it has to work from least to most important columns),
and proceed in a DFS manner to explore all possible sorting. However, the
stable sort can not take advantage of the information contained in $Q$ and
would still need to do a comparison sort on each new column. 

With minor alteration of \autoref{alg:qlsallperms} we can handle \autoref{prob:allsubset}.

\begin{algorithm}
\begin{algorithmic}[1]
    \REQUIRE $D \in \R^{m\times n}$, $Q \in \Z^{m\times n}$, $\{a_1,\ldots,a_p\}$ where $ a_i \in \{0,\ldots,n-1\}$ $\forall i$.
    \FOR{ $i$ such that $n > i > \max(\{a_1,\ldots,a_p\}$}
        \STATE $L' := $QuickLexSortRefine($D$,$Q$,$i$,$L$).
        \STATE Print $L'$.
        \STATE QuickLexSortAllSubsets($D$,$Q$,$\{a_1,\ldots,a_p,i\}$,$L'$)
    \ENDFOR
\end{algorithmic}
\caption{QuickLexSortAllSubsets}
\label{alg:qlsallsubset}
\end{algorithm}

\autoref{alg:qlsallsubset} is initially called with QuickLexSortAllSubsets($D$,$Q$,$()$,$0\in \R^m$).

\begin{lemma}
    \autoref{alg:qlsallsubset} has running time $O\left(m 2^n\right)$ and space requirements $O(mn)$.
\end{lemma}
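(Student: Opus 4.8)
The plan is to analyze the recursion tree of \autoref{alg:qlsallsubset} directly. First I would observe that the constraint $n > i > \max\{a_1,\ldots,a_p\}$ in the \textbf{for} loop forces every recursive call to be made with a strictly increasing sequence of column indices; consequently the invocations of QuickLexSortAllSubsets are in bijection with the subsets of $\{0,\ldots,n-1\}$, the empty set corresponding to the initial call and each non-empty subset appearing exactly once, written in increasing order. Thus the recursion tree has exactly $2^n$ nodes, of which $2^n-1$ are non-root nodes, and each non-root node performs exactly one invocation of \autoref{alg:qlsrefine} on line 2 (together with a print and a recursive call on lines 3--4).

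For the time bound, by \autoref{lem:qlsrefineruntime} each such invocation of QuickLexSortRefine runs in time $O(m)$, so all of them together cost $O\!\left(m(2^n-1)\right)=O(m2^n)$. The only other work is the loop bookkeeping: the total number of \textbf{for}-loop iterations, summed over all nodes of the recursion tree, equals the number of edges of that tree, namely $2^n-1$, and each iteration does $O(1)$ work besides the QuickLexSortRefine call and the recursive call it triggers. Hence the total running time is $O(m2^n)$.

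For the space bound, note that \autoref{alg:qlsrefine} does not modify its input ranking vector $L$ but returns a fresh vector $L'$; since on line 2 the call to QuickLexSortRefine completes before the recursive call on line 4, at most one invocation of \autoref{alg:qlsrefine} is live at any time, contributing $O(m)$ auxiliary space. The depth of the recursion is at most $n$, since each level appends one new column index drawn from a set of size $n$, so the call stack holds at most $n$ ranking vectors of length $m$, i.e.\ $O(mn)$. Adding the $O(mn)$ occupied by the input matrices $D$ and $Q$ gives total space $O(mn)$.

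Since the argument is essentially a counting argument over the DFS tree, there is no real obstacle; the one point that must be checked with care is that the $\max$-constraint genuinely enumerates each subset exactly once (so that $2^n$ is an exact node count, not merely an upper bound) and that $L$ is effectively passed by value and never mutated, so that the depth-first search backtracks correctly and the per-node cost is indeed a single $O(m)$ refinement.
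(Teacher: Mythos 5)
Your proposal is correct and takes essentially the same approach as the paper: counting the $2^n-1$ calls to QuickLexSortRefine, each costing $O(m)$ time, with space dominated by the input matrices and the recursion stack. Your version is more careful than the paper's one-line proof (notably in justifying the $O(mn)$ space via the recursion depth of at most $n$), but the argument is the same.
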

\begin{proof}
    Exactly $2^n - 1$ calls are made to \autoref{alg:qlsrefine}, which itself has running time and space $O(nm)$.
\end{proof}

Again, attempting to use StableLexSort on line 2 of \autoref{alg:qlsallsubset}
would increase the running time to $O(m\log(m) 2^n)$. In both cases, this gain
may seem modest given the dominating terms involving $n$. However, we note that
in many applications one may not in fact enumerate all sub-matrices but will
instead enumerate all nested sub-matrices up to a certain cardinality.  For
example if one wishes to enumerate all sub-matrices with up to two columns
then the running time of using QuickLexSort is $O(m n^2)$ compared to $O(m \log(m) n^2)$ 
for StableLexSort.

In general consider a set of nested sub-matrices indexed by their sequence of
columns $\mathcal A$, and let $|\mathcal A|$ denote the size of $A$. Nested in
the sense that if $A \in \mathcal A$ then either $A$ is a singleton or there
exist $B \in \mathcal A$ such that $A$ and $B$ differ by one element. Then if
one can efficiently (linear in $|\mathcal A|$) enumerate the sub-matrices given
by $\mathcal A$ then the running time to sort all $| \mathcal A|$ sub-matries
using QuickLexSort is $O(m | \mathcal A |)$.  Extending the previous example,
if one wishes to sort all sub-matrices with up to $k$ columns, then the
running time of QuickLexSort is $O(m n^k)$.

\section{Experiments}
\label{sec:experiments}
As a verification of the running times of \autoref{alg:qlsrefine} and
\autoref{alg:qls} claimed in \autoref{lem:qlsrefine} and \autoref{lem:qls}, we
performed a short experiment using the {\tt Poker Hand} data set from the
University of California, Irvine's Machine Learning Repository
\cite{Bache+Lichman:2013}. The data set consists of a matrix with $25,010$ rows
and $7$ columns with discrete numerical values. Ten data sets were created for
the experiments, consisting of the first $10\%$, $20\%$, $\ldots$, $100\%$
rows. For each of the ten data sets, QuickLexSort (using merge sort for the
preliminary sorting of data columns) and StableLexSort (using merge sort) were
run to sort all possible non-empty subsets of $7$ columns. Note, the running
times for QuickLexSort includes the pre-sorting step.
\autoref{fig:SLSvsQLS} shows the time to lexicographically sort using both
methods. It is fairly easy to see the linear growth in running time of
QuickLexSort compared to the linear times log factor running time of StableLexSort with
respect to the number of rows.

\begin{figure}
    \begin{center}
    \includegraphics[width=12cm]{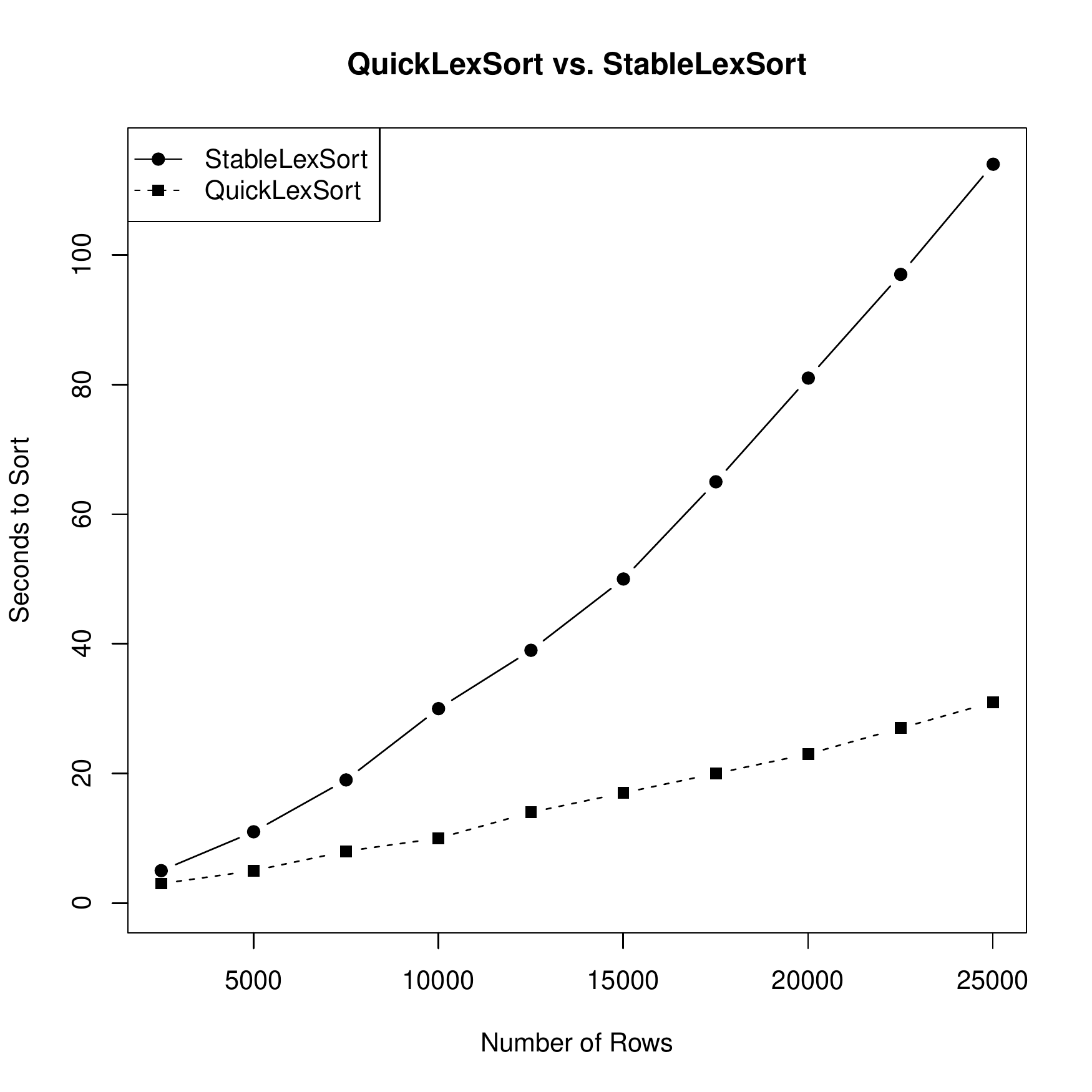}
    \end{center}
    \caption{Running times of QuickLexSort and StableLexSort to sort all
    possible subsets of seven columns on ten truncations of the {\tt Poker
Hand} data set from the UCI Machine Learning data base.}
\label{fig:SLSvsQLS}
\end{figure}

%
%

\section{Comparison to AD-trees}
\label{sec:adtrees}
A popular method which specifically computes contingency tables (and sorts) is
ADtrees \cite{Moore:1998fk}. Although retrieving a contingency table (or
sorting) can be very fast -- faster than QuickLexSort -- the time and space
requirements to compute and store the required data structures can be enormous. Assuming
binary features (features only take two values), the cost to build an AD
tree is bounded above by 
\begin{equation} 
\sum_{k=0}^{\lfloor log_2(m) \rfloor} \frac{m}{2^k} { n \choose k }, 
\label{eq:ad1}
\end{equation} 

where $D \in \R^{m \times n}$. If all possible combinations
of the binary features appear in $D$, the space requirement would be $2^n$.
Even with a reasonable number of rows the space requirement would be bound
above by 

\begin{equation} 
    \sum_{k=0}^{\lfloor log_2(m) \rfloor} { n \choose k }. 
\label{eq:ad2}
\end{equation} 
The time
and space requirements can become practically prohibitive as $n$ and $m$ grow.
For example, constructing and storing the ADtree for a dataset $D \in \R^{1000
\times 50000}$ would be infeasible.  By contrast, QuickLexSort only requires
linear space and time for each sort.

\section{Applications to Bayesian Networks}
\label{sec:bn}
Bayesian networks (BN) are graphical models that have applications in a
plethora of areas including machine learning, statistical inference, finance,
biology, artificial intelligence, etc \cite{kollar2009probabilistic,
studeny2005probabilistic}. Bayesian networks represent the conditional
independences in some given data and are modeled through directed acyclic
graphs (DAGs). In a naive sense, the task of learning the BN structure is to
explore all possible DAGs and choose the DAG which best fits the data. Note
that learning the BN structure is NP-hard \cite{chickering1996learning,chickering2004large}. The fit
of a proposed DAG to the data is evaluated by a scoring function such as
Bayesian information criteria (BIC) or Bayesian Dirichlet equivalence (BDE). 
The BIC and BDE graph scoring functions evaluate a graph $G$ by looking
at each node $i$ and its parents $Pa(i)$ (the nodes which have edges directed
to $i$ in $G$). For every pair $(i,Pa(i))$, BIC and BDE compute local scores (a
score depending only on $(i,Pa(i))$). In the end, the score of the proposed
graph $G$ is a sum over all local scores $(i,Pa(i))$.  At
a low level, the BIC or BDE local score of $(i,Pa(i))$ simply requires two
contingency tables: 1) a contingency table of the input data matrix restricted
to columns indexed by $P(i)$, 2) a contingency table of the input data matrix
restricted to columns indexed by $i \cup Pa(i)$.

In many current methods to learn a BN, the task is roughly broken into two
steps. In step one, all local scores are precomputed. Often this is prohibitive
and in practice only the local scores up to a certain cardinality are computed,
or steps are taken to theoretically exclude certain local scores
\cite{de2011efficient,de2009structure}.  In step two, the structure is
learned by some intelligent method (Integer programming, Dynamic Programming,
Heuristically, and more)\cite{de2011efficient,chickering2003optimal,de2009structure,de2000new,jaakkola2010learning,barlettadvances,cussens2012bayesian,silander2012simple,singh2005finding}. Research has focused mainly on the
second step. However, the first step of local scoring merits exploration. For
example the condition that parent sets are limited in cardinality can be quite
artificial. But, if $m$ and $n$ become large, it may be prohibitive to compute
all local scores using an approach such as StableLexSort. However, QuickLexSort
is fast and requires small space.  Moreover, approaches can be taken in which
the local scores are done on-the-fly, in which case QuickLexSort can be of use.

One approach to learning BN is to perform a heuristic search of the solution
space by iteratively changing the current graph
structure\cite{moore2003optimal,madigan1994model,madigan1995bayesian,giudici1999decomposable}. Again, in many cases it may be
infeasible to store all contingency tables and it would be better to score each
new graph. The proposed new graph can be chosen such that previous contingency
tables can be updated efficiently by QuickLexSort. For example if the graphical
moves are restricted to simply adding or removing a single edge of the current
graph.  As a new approach, we are currently developing a method which explores
the solution space using characteristic imsets\cite{hemmecke2012characteristic,studeny2010characteristic,studeny2011polyhedral,StudenyHaws:2013} -- a more natural encoding of unique probability models forming BNs --
and QuickLexSort in order to efficiently move through the solution space.

A biological example of an application of Bayesian networks is the modeling of epistasis --
the interaction of multiple genes to produce a phenotype. Using Bayesian
networks (and related measures dependent on contingency tables) has
proven useful in detecting epistasis
\cite{sucheston2010comparison,jiang2011learning,shang2011performance}.  In this
case, one does not need to consider the full class of DAGs, and the problem
reduces to simply scoring.  Suppose we are given the genotypes for $1,000$
individuals each with $50,000$ single nucleotide polymorphisms (SNPs) and some
phenotype (disease/no-disease). Then in this case we have a matrix $D \in
\R^{1000 \times 50000}$.  The task of detecting k-way epistasis using Bayesian
networks reduces to computing contingency tables (sorting) all possible choices
of $k$ subsets columns of $D$.  Considering \autoref{eq:ad1} and
\autoref{eq:ad2}, it would be impractical to use ADtrees.  For most $k>2$, it
would be infeasible to store all contingency tables for all $ 50,000 \choose k$
SNP $k$-tuples. However, QuickLexSort requires linear time and space, to check
each choice of k columns of $D$.

As a preliminary experiment, QuickLexSort was used for Bayesian network scoring
and the detection of epistasis on Maize genotype and phenotype data
\cite{rincent2012maximizing} used for the European CornFed program. The data
used consisted of $261$ inbred \emph{dent} maize plant lines (rows) and
$30,027$ SNPs (columns). The phenotype used was male flowering time. The
Bayesian Dirichlet equivalent (BDE) \cite{heckerman1995learning} score was used
to detect up to two-way epistasis. In general, to compute the BDE score of any
$k$ SNPs with respect to the phenotype, one needs two contingency tables: the
contingency table given by the sub-matrix over those $k$ SNPs as well as the
contingency table given by the sub-matrix over the $k$ SNPs with the additional
column of phenotypes. QuickLexSort lends itself naturally to this process
since we iterate through the sets of SNPs in depth-first-search manner. Moreover,
to compute the latter contingency table we simply append the phenotype column
and do one call to \autoref{alg:qlsrefine}. For the experiment, the BDE score was computed
for all singleton and pairs of SNPs. The QuickLexSort based approach took
approximately $4.99$ hours to perform ${30027 \choose 2} - 1 + 30027 =
450,825,379$ BDE scores on the $261 \times 30,027$ data matrix, which required
twice as many contingency table computations.



\bibliographystyle{plainnat}
\bibliography{references}

\section{Appendix}
\label{sec:appendix}
\autoref{alg:qlsrefine} can be easily modified to handle extra input and output 
of more natural data structure to store the current sorting. Moreover, the modification
does not change the running time or space constraints. 

\begin{definition}
Let $D' \in \R^{m \times l}$ be a sub-matrix of $D \in \R^{m \times n}$. 
The unique \emph{order vector} $O \in \R^m$ of some 
ordering $\widehat O$ of rows of $D'$ is a vector such that
$O_i$ is the row index of $D$ of the $i$th item in the ordering $\widehat O$.
\end{definition}

If we take the order $O$ and ranking vector $L$ of the same ordering together
we define another useful data structure.

\begin{definition}
Let $D' \in \R^{m \times l}$ be a sub-matrix of $D \in \R^{m \times n}$
with some ordering $\widehat O$ of rows of $D'$ and its unique order
vector $O$ and ranking vector $L$.
The \emph{partitioning vector} $P \in \R^{\max(L)+1}$ is the vector such that
$P_i$ is the index into the order vector $O$ where the rows of rank $i$
begin.
\end{definition}

The ranking and order vectors store the same information, but it not
necessarily linear time to transform from one to the other. The benefit to the
modified algorithm is that it updates $L$, $O$, and $P$ simultaneously. In many
cases it is easier to work with the ordering vector. Moreover, the partitioning
vector with the order vector gives all the necessary information to form a
contingency table.

\begin{example}
Consider matrix $D$ and the sub-matrix $D''$ in \autoref{ex:D}. The ranking,
ordering, and partitioning vectors are $L'' := [4,3,0,3,2,2,5,4,1,2]^\top$,
$O'' := [2,8,4,5,9,1,3,0,7,6]$, and $P'' := [0,1,2,5,7,9]$. Immediately, we can read off
from $L''$ that there are six rank blocks, if we want to traverse the rows
of the matrix $D''$ in the order we simply use $O''$, and $P''$ tells us how many rows
of each rank are present.
\end{example}

We now give \autoref{alg:qlsrefinenew} which takes the same input as
\autoref{alg:qlsrefine} as well as the ordering and partitioning vectors. It
outputs the new ranking, ordering, and partitioning vector with respect to the
lexicographical ordering one gets by appending column $i$.

\begin{algorithm}[!h]
\begin{algorithmic}[1]
    \REQUIRE $D \in \R^{m\times n}$, $Q \in \Z^{m\times n}$, $i \in \{0,\ldots,n-1\}$, $L \in \Z^m$, $P \in Z^m$.
    \ENSURE $L' \in Z^m$, $O' \in Z^m$, $P' \in Z^m$.
    \STATE $L' := \ve 0 \in Z^m$. 
    \STATE $O' := \ve 0 \in Z^m$. 
    \STATE $P' := \ve P \in Z^m$. 
    \STATE $\IDval := \ve 0 \in Z^m$. \# Records most recent value in $D$ w.r.t.\ ID.
    \STATE $\IDvalInit := \ve 0 \in Z^m$.
    \STATE $subID := \ve 0 \in Z^m$. \# Records subID of $L$.
    \STATE $newCount := \ve 0 \in Z^m$. \# Records count of refinements of each input ID of L.
    \FOR{ $j = 0,\ldots,m-1$ }
        \STATE $O'[P'[L[Q[j,i]]]] = Q[j,i]$.
        \STATE $P'[L[Q[j,i]]] := P'[L[Q[j,i]]] + 1$.
        \IF{ $\IDvalInit[L[Q[j,i]]] == 0$} 
            \STATE $\IDvalInit[L[Q[j,i]]] := 1$.
            \STATE $\IDval[L[Q[j,i]]] := D[Q[j,i],i]$.
        \ELSE
            \IF{ $\IDval[L[Q[j,i]]]\ != D[Q[j,i],i]$}
                \STATE $\IDval[L[Q[j,i]]] := D[Q[j,i],i]$.
                \STATE $newCount[L[Q[j,i]] := newCount[L[Q[j,i]] + 1$.
            \ENDIF
        \ENDIF
        \STATE $subID[Q[j,i]]] := newCount[L[Q[j,i]]]$.
    \ENDFOR
    \STATE $numNewID := 0 \in \Z^M$.
    \STATE $numNewID[m-1] := \sum_{j=0}^{m-2} newCount[j]$.
    \FOR{ $j = m-2,\ldots,1$ }
        \STATE $numNewID[j] := numNewID[j+1] - newCount[j]$.
    \ENDFOR
    \STATE $prevRank := -1$.
    \FOR{ $j = 0,\ldots, m-1$}
        \STATE $L'[O'[j]] := L[O'[j]] + numNewID[L[O'[j]]] + subID[O'[j]]$.
        \IF{ $prevRank <> L'[O[j]]$}
            \STATE $prevRank := L'[O[j]]$.
            \STATE $P'[L'[O[j]]] := j$.
        \ENDIF
    \ENDFOR
    \RETURN $(L',O',P')$.
\end{algorithmic}
\caption{QuickLexSortRefine\dag: Handles order vector.}
\label{alg:qlsrefinenew}
\end{algorithm}

\begin{lemma} 
\label{lem:qlsrefinenew}
If $D \in \R^{m \times n}$, $Q \in \R^{m \times n}$ where the $j$th column of
$Q$ stores the sorting of the $j$th column of $D$, $L$ is the ranking vector,
$O$ is the ordering vector, and $P$ is the partitioning vector of the
lexicographical sorting of the sub-matrix of $D$ determined the sequence of
columns $(a_1,\ldots,a_p)$, and $i \in \{0,\ldots,n-1\}$, then
\autoref{alg:qlsrefinenew} returns the ranking vector $L'$, the ordering vector
$O'$, and the partitioning vector $P'$ of the lexicographical sorting of the
sub-matrix of $D$ determined by the sequence of columns $(a_1,\ldots,a_p,i)$.
\end{lemma}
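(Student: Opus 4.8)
The plan is to reduce most of the work to \autoref{lem:qlsrefine} and then argue separately about the two new data structures $O'$ and $P'$. First I would observe that the statements of \autoref{alg:qlsrefinenew} that compute $newCount$, $subID$, $numNewID$ and the value $L'[\,\cdot\,]$ are, after deleting the interleaved updates to $O'$ and $P'$, identical to those of \autoref{alg:qlsrefine}; the extra statements only \emph{read} (never overwrite) $L$, $Q$, $D$, $newCount$, $subID$. Hence by \autoref{lem:qlsrefine} the returned $L'$ is exactly the ranking vector of the lexicographic sorting of $D$ restricted to $(a_1,\ldots,a_p,i)$. In particular I would reuse three facts established there: for each old rank $r$ the first loop visits the rows $j$ with $L[j]=r$ in nondecreasing order of $D_{ji}$; within such a block $subID[j]$ is nondecreasing in $D_{ji}$ and $L'[j]=r+numNewID[r]+subID[j]$; and $numNewID[r+1]=numNewID[r]+newCount[r]$ (from the partial-sum recurrence in lines 22--24).

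Next I would show that after the first loop $O'$ is a permutation of $\{0,\ldots,m-1\}$ that is consistent with $L'$, i.e.\ $L'[O'[0]]\le L'[O'[1]]\le\cdots\le L'[O'[m-1]]$. The key point is that $P'$ is initialized to the old partitioning vector $P$ and then used as an array of cursors: when a row of old rank $r$ is visited it is written into $O'[P'[r]]$ and $P'[r]$ is incremented. Since there are exactly $P[r+1]-P[r]$ rows of old rank $r$, the cursor $P'[r]$ sweeps through the half-open block of positions $[P[r],P[r+1])$ and never collides with the block of any other rank; so $O'$ is a bijection and the positions $[P[r],P[r+1])$ are occupied precisely by the rows of old rank $r$, listed in the order they were visited — that is, in nondecreasing order of $D_{\cdot i}$, hence of $subID$, hence of $L'$ on that block. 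Using $numNewID[r+1]=numNewID[r]+newCount[r]$ I would then check that the largest $L'$ value attained on block $r$ equals one less than the smallest $L'$ value on block $r+1$, so the per-block nondecreasing sequences splice together into a globally nondecreasing sequence, giving consistency of $O'$ with $L'$. Since $O'$ is a bijection taking the same multiset of $L'$-values in the same sorted order as $L'$, it is the order vector of the lexicographic sorting determined by $(a_1,\ldots,a_p,i)$.

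With that in hand the correctness of $P'$ is bookkeeping. The final loop walks $j=0,\ldots,m-1$ and, because $L'[O'[j]]$ is nondecreasing in $j$ and starts at a value $\ge 0 > prevRank=-1$, each distinct rank value $i$ triggers exactly one reassignment, at the least $j$ with $L'[O'[j]]=i$; this sets $P'[i]$ to that $j$, which is by definition the index into $O'$ at which the rows of rank $i$ begin. (I would remark in passing that the occurrences of $O[j]$ on lines 30--33 are typographical and must read $O'[j]$, since only $O'$ has been assigned a value.) Combining the three pieces — $L'$ from \autoref{lem:qlsrefine}, bijectivity and $L'$-consistency of $O'$, and the partial-scan computation of $P'$ — proves the lemma.

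I expect the main obstacle to be the $O'$ step: one must be careful that reusing the old partitioning vector $P$ as a mutable cursor array really fills disjoint contiguous blocks exactly once each, and that the within-block order induced by the visiting sequence (increasing $D_{\cdot i}$) coincides with the order induced by $L'$; the splicing-across-blocks point also relies on the exact arithmetic of $numNewID$. Everything else either quotes \autoref{lem:qlsrefine} or is a routine induction over the final loop, and the running time and space bounds are unchanged since only a constant amount of extra work is done per iteration.
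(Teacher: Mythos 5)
Your proposal is correct and follows essentially the same route as the paper's proof: reduce the correctness of $L'$ to \autoref{lem:qlsrefine} (noting the interleaved statements do not disturb the shared state), argue that $P'$, initialized to $P$ and used as an array of cursors, fills the disjoint old-rank blocks of $O'$ in the visiting order given by $Q_{\cdot i}$, and then obtain $P'$ from the final scan. If anything, your treatment of the $O'$ step (the block-splicing via the identity $numNewID[r+1]=numNewID[r]+newCount[r]$, and the observation about the $O[j]$ versus $O'[j]$ typo) is more explicit than the paper's, which simply asserts that the cursor construction yields the correct order vector.
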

\begin{proof}
The data structures $IDval$, $IDvalInit$, $newCount$, and $subID$ all are
initialized and updated the same in \autoref{alg:qlsrefinenew} as they were in
\autoref{alg:qlsrefine}. The new order vector $O'$ is initialized to be zeros
and the new partitioning vector $P'$ is initialized to equal the input
partitioning vector $P$. Recall the $P_i$ is the index into the ordering vector
$O$ where the rows of rank $i$ begin. The goal of lines $9-10$ are to create
the new ordering vector $O'$. To do this we must reorder all rows that have the
same previous ranking according to $L$. Thus, the vector $P'$ is temporarily
used to point to the next available index into $O'$ were the rows all have the
same rank according to $L$. In line $9$ we fill in the entries of $O'$ as we
traverse the new column $i$ according to the pre-computed sorting given in
$Q_{\cdot i}$. Specifically we look at the current rank of $Q[j,i]$ which is
given in $L[Q[j,i]]$. Then $P'[L[Q[j,i]]]$ points to the next available
position in $O'$ with rank equal to $L[Q[j,i]]$. Since we have filled this
position $O'$ we increment $P'[L[Q[j,i]]$ in line $10$. In the end, $O'$ will
be the ordering vector with respect to the new order given by the sequence
of columns $(a_1,\ldots,a_p,a_i)$.

In line $27$ we initialize the data structure $prevRank$ which will store the
previously observed rank in the following for loop.  In \autoref{alg:qlsrefine}
and lines $23-24$ we filled in entries of $L'$ by traversing $j=0,\ldots,m-1$.
We note that we could have traversed $j$ in any particular order. Thus, in
\autoref{alg:qlsrefinenew} we traverse $j$ in the order given by the new
ordering vector $O'$.  Therefore, by the arguments in the proof
\autoref{lem:qlsrefine}, $L'$ is the unique ranking vector given by the
sequence of columns $(a_1,\ldots,a_p,a_i)$. 

Lastly in lines $30-32$ whenever we observe a row with a new rank, we set $P'$
to point to the appropriate index into $O'$. Therefore, $P'$ is the 
partitioning vector given by the sequence of columns $(a_1,\ldots,a_p,a_i)$.

\end{proof}

\begin{lemma} 
\autoref{alg:qlsrefinenew} runs in time $O(m)$ and space $O(m)$.
\label{lem:qlsrefinenewruntime}
\end{lemma}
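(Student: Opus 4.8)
The plan is to reuse, almost verbatim, the reasoning behind \autoref{lem:qlsrefineruntime}, observing that \autoref{alg:qlsrefinenew} has the same skeleton as \autoref{alg:qlsrefine} and differs only by (i) the initialization of the two extra length-$m$ vectors $O'$ and $P'$, (ii) the two assignments that maintain $O'$ and $P'$ inside the first \texttt{for} loop (lines 9--10), and (iii) the constant-size conditional that rewrites one entry of $P'$ inside the last \texttt{for} loop (lines 29--32). Nothing new needs to be proved about correctness; this is purely a bookkeeping count.

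First I would handle the space bound. The inputs $D \in \R^{m\times n}$ and $Q \in \Z^{m\times n}$ are only read, never copied, so they contribute nothing to the working space of the routine itself; the $O(mn)$ storage quoted for the enclosing algorithm belongs to whatever caller holds $D$ and $Q$. The quantities actually allocated by \autoref{alg:qlsrefinenew} are $L'$, $O'$, $P'$, $\IDval$, $\IDvalInit$, $subID$, $newCount$, and $numNewID$, each of length $m$ (with $P'$ really of length $\max(L)+1 \le m$). Summing these gives $O(m)$ auxiliary space.

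Next I would bound the time loop by loop. There are exactly three \texttt{for} loops (beginning at lines 8, 25, and 28), each iterating at most $m$ times. Every statement inside them is an array read/write or an arithmetic operation on scalars --- in particular the new lines 9--10 are an index into $O'$ followed by an increment of an entry of $P'$, and the new lines 29--32 are a comparison plus a single assignment into $P'$ --- so each loop body runs in $O(1)$ time. Hence each loop costs $O(m)$. The seven initialization statements on lines 1--7 and the remaining constant-count statements on lines 23--24 and 27 add only $O(m)$ more, so the overall running time is $O(m)$.

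The only point requiring a sentence of care --- the ``main obstacle,'' and a mild one --- is verifying that the nested access $O'[P'[L[Q[j,i]]]]$ on line 9 is genuinely a constant-time array operation, i.e.\ that the index it computes always lands in a legal position of $O'$. This is exactly the invariant established in the proof of \autoref{lem:qlsrefinenew}: when line 9 executes for a given $j$, the rows of current rank $L[Q[j,i]]$ are being filled into the block of $O'$ reserved for that rank, each such row is visited exactly once (because the first loop walks $D_{\cdot i}$ in the order dictated by $Q_{\cdot i}$), and $P'$ has been advanced to the next free slot of that block. Granting this invariant, every index used lies in $\{0,\dots,m-1\}$, every inner step is $O(1)$, and the claimed $O(m)$ time and $O(m)$ space bounds follow just as in \autoref{lem:qlsrefineruntime}.
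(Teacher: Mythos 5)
Your proposal is correct and follows essentially the same route as the paper's proof: count the three loops of $m$ iterations each with constant-time bodies, and observe that all auxiliary storage consists of length-$m$ vectors. Your additional remarks about the new lines maintaining $O'$ and $P'$ and the in-bounds invariant for the nested index on line 9 are sound elaborations of the same argument.
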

\begin{proof}
There are only three loops in \autoref{alg:qlsrefinenew}, each of them
repeated $m$ times. Each inner operation is constant time. The only space
requirements are determined by column vectors of the $m \times n $ input
matrices and the vectors of length $m$.
\end{proof}

\end{document}